\definecolor{citeblue}{rgb}{0.2,0.2,0.6}
\definecolor{darkblue}{rgb}{0.1,0.1,0.6}
\newcommand{\soutb}{\bgroup\markoverwith{\textcolor{blue}{\rule[.5ex]{2pt}{1pt}}}\ULon}
\newcommand{\soutr}{\bgroup\markoverwith{\textcolor{red}{\rule[.5ex]{2pt}{1pt}}}\ULon}
\newcommand\rme{\mathrm{e}}
\newcommand{\eg}{{\it e.g.}\,}
\newcommand{\ie}{{\it i.e.}\,}
\newcommand{\cf}{{\it cf.}\,}
\renewcommand\and{\qquad\text{and}\qquad}
\newcommand\sm{\setminus}
\newcommand\Op{\sfH_{\aa,X}}
\newcommand\dl{\delta}
\newcommand{\comm}[1]{}
\newcommand\G{\Gamma}
\renewcommand\aa{\alpha}
\newcommand\s{\sigma}
\newcommand\g{\gamma}
\newcommand\kp{\kappa}
\newcommand\ii{{\mathsf{i}}}
\renewcommand\Im{{\rm Im}\,}
\newcommand\arr{\rightarrow}
\newcommand\dd{{\mathsf{d}}}
\newcounter{counter_a}
\newenvironment{myenum}{\begin{list}{{\rm(\roman{counter_a})}}%
{\usecounter{counter_a}
\setlength{\itemsep}{1.ex}\setlength{\topsep}{0.8ex}
\setlength{\leftmargin}{5ex}\setlength{\labelwidth}{5ex}}}{\end{list}}
\numberwithin{figure}{section}
\numberwithin{equation}{section}
\theoremstyle{plain}
\newtheorem*{thm*}{Theorem}
\newtheorem{thm}{Theorem}[section]
\newtheorem{prop}[thm]{Proposition}
\theoremstyle{remark}
\newtheorem{remark}[thm]{Remark}
\theoremstyle{plain}
\newcommand\ov{\overline}
\newcommand\wt{\widetilde}
\newcommand\wh{\widehat}
\newcommand\sign{{\rm sign\,}}
\def\ov{\overline}
      \def\dC{{\mathbb C}}
   \def\dN{{\mathbb N}}   
      \def\dR{{\mathbb R}}
\def\dS{{\mathbb S}}
   \def\sfH{{\mathsf H}}
\def\sfS{{\mathsf S}}
\def\cD{{\mathcal D}}
   \def\cN{{\mathcal N}}   \def\cO{{\mathcal O}}
   \def\cZ{{\mathcal Z}}
\def\sfm{{\mathsf m}}
\newcommand{\dom}{\mathrm{dom}\,}
\def\section{\@startsection{section}{1}\z@{.9\linespacing\@plus\linespacing}%
	{.7\linespacing} {\fontsize{13}{14}\selectfont\bfseries\centering}}
\def\paragraph{\@startsection{paragraph}{4}%
	\z@{0.3em}{-.5em}%
	{$\bullet$ \ \normalfont\itshape}}
\newtheorem{claim}{Claim}[section]
\newtheorem{theorem}[claim]{Theorem}
\newtheorem{definition}[claim]{Definition} 
\numberwithin{equation}{section}
\newcommand\vX{\mathsf{v}_X}
\begin{document}
\title[Asymptotics of resonances induced by point interactions]{Asymptotics of resonances induced by point interactions}
\author{Ji\v{r}\'{i} Lipovsk\'{y}}
\address{Department of Physics, Faculty of Science, University of Hradec Kr\'{a}lov\'{e}, Rokitansk\'{e}ho 62, 500\,03 Hradec Kr\'{a}lov\'{e}, Czechia}
\email{jiri.lipovsky@uhk.cz}
\author{Vladimir Lotoreichik}
\address{Department of Theoretical Physics,  Nuclear Physics Institute CAS, 25068 \v{R}e\v{z} near Prague, Czechia}
\email{lotoreichik@ujf.cas.cz}

\begin{abstract}
	We consider the resonances of the self-adjoint 
	three-dimensional Schr\"odinger operator with point interactions of constant strength supported on the set $X = \{ x_n \}_{n=1}^N$. 
	The size of $X$ is defined by 
	$V_X = \max_{\pi\in\Pi_N} \sum_{n=1}^N |x_n - x_{\pi(n)}|$, where $\Pi_N$ is the family of
	all the permutations of the set $\{1,2,\dots,N\}$.
	We prove that the number
	of resonances counted with multiplicities
	and lying inside the disc of radius $R$ behaves
	asymptotically linear $\frac{W_X}{\pi} R  + \cO(1)$ as $R \arr \infty$, where
	the constant $W_X \in [0,V_X]$ can be seen as the effective size of $X$. Moreover, we show that there exist configurations of any number of points such that  $W_X = V_X$. Finally, we construct an example for $N = 4$ with  $W_X < V_X$, which can be viewed as an analogue of a quantum graph
		with non-Weyl asymptotics of resonances.
\end{abstract}

\maketitle

PACS: 03.65.Ge, 03.65.Nk, 02.10.Ox

\section{Introduction}
%
In this note we discuss the resonances of
the three-dimensional Schr\"o\-dinger operator $\Op$ with point interactions of constant strength $\aa \in\dR$ supported on the discrete set $X =\{x_n\}_{n=1}^N\subset\dR^3$, $N\ge 2$. 
The corresponding Hamiltonian $\Op$ is associated with the formal differential expression
\begin{equation}\label{eq:formal}
	-\Delta + \aa\sum_{n=1}^N \dl(x-x_n)\,,\qquad \text{on}~\dR^3\,,
\end{equation}
where $\dl(\cdot)$ stands for the
point $\dl$-distribution in $\dR^3$.
The Hamiltonian $\Op$ can be rigorously defined as
a self-adjoint extension of a certain symmetric operator in the Hilbert space $L^2(\dR^3)$; \cf~Section~\ref{sec:operator} for details. Resonances
of $\Op$ were discussed in the monograph~\cite{AGHH}
and in several more recent publications \eg~\cite{AK17, BFT, EGST96}, see also the review \cite{DFT} and the references therein. 

Our ultimate goal is to obtain the asymptotic distribution
for the resonances of $\Op$.
To this aim, we define the \emph{size} of $X$ by
\begin{equation}\label{eq:V}
	V_X 
	:= 
	\max_{\pi\in \Pi_N} 
		\sum_{n=1}^N |x_n - x_{\pi(n)}|\,,
\end{equation}
where $\Pi_N$ is the family of all the permutations
of the set $\{1,2,\dots,N\}$. A~graph-theoretic
interpretation of the value $V_X$ through so-called
irreducible pseudo-orbits is given in Remark~\ref{rem:pseudo_V}. 	 
This definition
of the size is motivated by the condition on resonances for $\Op$ given in Section~\ref{sec:condition}.
As the main result of this note, we prove that the number $\cN_{\aa,X}(R)$ of the resonances of $\Op$ lying inside the disc $\{z\in\dC\colon |z| < R\}$ and with multiplicities taken into account 
behaves asymptotically linear
\begin{equation}\label{eq:asymp}
	\cN_{\aa,X}(R) 
	= \frac{W_X}{\pi} R + \cO(1)\,,
	\qquad R\arr \infty\,,
\end{equation}
where the constant $W_X\in [0,V_X]$
does not depend on $\aa$ and can be viewed as the effective size of $X$. The constant $W_X$ can be computed by an implicit formula~\eqref{eq:formula_WX}. 
It is not at all clear whether a simple explicit
formula for $W_X$ in terms of $X$ can be found.

In the proof of~\eqref{eq:asymp} we
use that the resonance condition for $\Op$ acquires the form of an exponential polynomial, which can be obtained by a direct computation or alternatively
using the pseudo-orbit expansion
as explained in Section~\ref{sec:pseudo}. 
Recall that an exponential polynomial is a sum of finitely many terms, each of which is a product of a rational function and an exponential; 
\cf~the review paper~\cite{La} and the monographs~\cite{BeC,BeGa}.
In order to obtain the asymptotics~\eqref{eq:asymp} we employ a classical result on the distribution of zeros of exponential polynomials, recalled in Section~\ref{sec:exppoly}
for the convenience of the reader. 

A configuration of points $X$
for which $W_X = V_X$ is said to be of \emph{Weyl-type}.
We show that for any $N \in \dN$ there exist
Weyl-type configurations consisting of $N$ points. 
For two and three points ($N\le 3$),
in fact, any configuration is of Weyl-type, 
as shown in Section~\ref{sec:Weyl}.
On the other hand, we present in 
Section~\ref{sec:nonWeyl} an example 
of a non-Weyl configuration for $N = 4$, for which strict inequality $W_X < V_X$  holds. We expect that such configurations can also be constructed for any $N > 4$. One can trace an analogy 
with non-Weyl quantum graphs studied in~\cite{DEL,DP}. Non-uniqueness of the permutation at which the maximum in~\eqref{eq:V}
is attained, is a necessary condition for a configuration of points $X$ to be non-Weyl.
Exact geometric characterization of non-Weyl-type point configurations remains an open question. Besides
that a physical interpretation of this mathematical
observation still needs to be clarified. 

It is worth pointing out that $\cN_{\aa,X}(R)$   is \emph{asymptotically linear} similarly as the counting function for resonances of the one-dimensional 
Schr\"odinger operator $-\frac{\dd^2}{\dd x^2} +V$
with a potential $V\in C^\infty_0(\dR;\dR)$; see~\cite{Z87}. The exact asymptotics of
the counting function for  resonances of the three-dimensional
Schr\"odinger operator $-\Delta + V$ with a
potential $V\in C^\infty_0(\dR^3;\dR)$
is known  only in some special cases,
but for ``generic'' potentials
this counting function behaves as $\sim\!R^3$,
thus being \emph{not asymptotically linear}; see~\cite{CH08} for details.


\section{Exponential polynomials}\label{sec:exppoly}

In this section we introduce exponential polynomials and recall a classical result on the asymptotic distribution of their zeros. This result was first obtained by 
P\'olya~\cite{Po} and later improved by
many authors, including Schwengeler~\cite{Sch} and
Moreno~\cite{M73}. We refer the reader to the 
review~\cite{La} by Langer and to the monographs~\cite{BeC,BeGa}.
\begin{definition}\label{def:exppoly}
	An exponential polynomial $F\colon\dC\arr\dC$ is a function of the form
	\begin{equation}\label{eq-exppol}
		F(z) = 
		\sum_{m=1}^M z^{\nu_m} A_m(z) \rme^{\ii z\s_m}\,, 
	\end{equation}
	where $\nu_m \in \dR$, $m= 1,2,\dots,M$,  $A_m(z)$ are rational functions in~$z$ not vanishing identically, and the constants 
	$\s_m\in \dR$ are ordered increasingly ($\s_{\min}:= \s_1 < \s_2 < \dots < \s_M =: \s_{\max} $). 
\end{definition}
For example, for the exponential polynomial 
\[
	F(z) = 
	z \frac{z+\ii}{z-\ii} \rme^{\ii z} 
	+ 
	z^2 \frac{z^2+\ii}{z^2+ 1} \rme^{2\ii z}
\]
we have $M = 2$, $\nu_1 = 1$, $\nu_2 = 2$,
$\s_1 = 1$, $\s_2 = 2$, $A_1(z) = \frac{z+\ii}{z-\ii}$, $A_2(z) = \frac{z^2+\ii}{z^2+ 1}$.

The zero set of an exponential polynomial $F$ is defined by
\begin{equation}\label{key}
	\cZ_F := \{z\in\dC\colon F(z) = 0\}\,.
\end{equation}
For any $z\in\cZ_F$ we define its multiplicity
$\sfm_F(z)\in\dN$
as the algebraic multiplicity of the root $z$ of the function~\eqref{eq-exppol}.
Moreover, we introduce the counting function for an exponential polynomial $F$ by 
\[
	\cN_F(R) = \sum_{z\in\cZ_F\cap\cD_R}
	\sfm_F(z)\,,
\]
where $\cD_R := \{z\in\dC\colon |z| < R\}$
is the disc in the complex plane centered at the origin
and having the radius $R>0$. Thus, the value $\cN_F(R)$ equals the number of zeros of $F$ counted with multiplicities and lying inside $\cD_R$. 
Now we have all the tools at our disposal to state the result on the asymptotics of $\cN_F(R)$, proven in~\cite[Thm. 6]{La},
see also \cite[Thm. 3.1]{DEL}. 
\begin{theorem}
	\label{lem1}
	Let $F$ be an exponential polynomial 
	as in~\eqref{eq-exppol} such that 
	\[
		\lim_{z\to \infty} A_m(z) = a_m \in\dC\sm\{0\}\,,
		\qquad \forall\, m =1,2,\dots,M\,.
	\] 
	Then the counting function for $F$ asymptotically behaves as
	\[
		\cN_F(R) 
		= 
		\frac{\s_{\max}-\s_{\min}}{\pi}\, R + \cO(1)\,,
		\qquad R\arr \infty\,.
	\]
\end{theorem}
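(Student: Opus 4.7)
My plan is to apply the argument principle to $F$ along the circle $|z|=R$, choosing $R$ in a sequence avoiding zeros of $F$, so that
\[
\cN_F(R) \;=\; \frac{1}{2\pi}\,\Delta_{|z|=R}\arg F(z).
\]
The guiding observation is that $|\rme^{\ii z\s_m}|=\rme^{-\s_m y}$ for $z=x+\ii y$, so on the upper semicircle ($y>0$) the smallest exponent $\s_1=\s_{\min}$ yields the largest modulus, whereas on the lower semicircle ($y<0$) the largest exponent $\s_M=\s_{\max}$ does. Fix a small $\dl>0$ and split the circle into two \emph{bulk arcs} $\Gamma_{\pm}(\dl)$, corresponding to $\theta\in[\dl,\pi-\dl]$ and $\theta\in[-\pi+\dl,-\dl]$, together with two short \emph{transition arcs} near $\theta\in\{0,\pi\}$.

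On $\Gamma_+(\dl)$ I would factor out the dominant term:
\[
F(z) \;=\; z^{\nu_1} A_1(z)\,\rme^{\ii z\s_1}\bigl(1 + E(z)\bigr),\qquad
E(z) \;:=\; \sum_{m=2}^M \frac{z^{\nu_m-\nu_1}A_m(z)}{A_1(z)}\,\rme^{\ii z(\s_m-\s_1)}.
\]
Since $\s_m-\s_1>0$ for $m\ge 2$ and $y\ge R\sin\dl$ on $\Gamma_+(\dl)$, each summand in $E(z)$ carries an exponentially small factor; hence $E(z)\arr 0$ uniformly on $\Gamma_+(\dl)$ and its contribution to $\Delta\arg F$ is $o(1)$. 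The prefactor contributes explicitly: $\arg \rme^{\ii z\s_1}=\s_1 R\cos\theta$ changes by $-2\s_1R\cos\dl$; $z^{\nu_1}$ contributes $\nu_1(\pi-2\dl)$; and $\arg A_1(z)\arr\arg a_1\neq 0$ adds only $\cO(1)$. The same analysis on $\Gamma_-(\dl)$, with $\s_M,\nu_M,A_M$ replacing $\s_1,\nu_1,A_1$, gives a change of $2\s_MR\cos\dl+\cO(1)$. Summing and dividing by $2\pi$, the bulk arcs contribute $\frac{(\s_{\max}-\s_{\min})\cos\dl}{\pi}\,R+\cO(1)$ to $\cN_F(R)$.

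The main obstacle is to show that the argument change on each transition arc is $\cO(1)$ uniformly in $R$: such an arc has length $\cO(R\dl)$ and on it no single term of $F$ dominates, so a naive estimate would spoil the $\cO(1)$ remainder. I plan to exploit the additional consequence of the dominance estimate that all zeros of $F$ lie in some horizontal strip $|y|\le C$ (outside which the modulus of the dominant term exceeds the sum of the remaining moduli). Consequently, on each transition arc $F$ reduces essentially to an exponential polynomial of one real variable, and a Jensen-type bound for entire functions of exponential type controls $\Delta\arg F$ there by the number of enclosed zeros plus a constant. Letting $\dl=\dl(R)\arr 0$ at a suitable rate (say $\dl\sim R^{-1}$), so that $\cos\dl=1+\cO(R^{-2})$ and each transition arc has length $\cO(1)$, yields the desired bound. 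This is the Pólya--Schwengeler strategy and is the step I expect to require the most care.

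Combining the bulk and transition contributions produces
$\cN_F(R) = \frac{\s_{\max}-\s_{\min}}{\pi}\,R + \cO(1)$
along the chosen sequence of admissible radii, and monotonicity of $\cN_F$, together with the fact that consecutive admissible radii differ by $\cO(1)$, extends the estimate to all $R>0$.
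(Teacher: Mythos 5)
You should first be aware that the paper does not prove Theorem~\ref{lem1} at all: it is quoted from the classical literature (P\'olya, Schwengeler, Langer; see \cite[Thm.~6]{La} and \cite[Thm.~3.1]{DEL}). Your sketch is therefore an attempt to reprove a classical result. Its architecture --- argument principle on $|z|=R$, dominance of the $\s_{\min}$-term on the upper bulk arc and of the $\s_{\max}$-term on the lower one, and the explicit bulk contribution $\frac{\s_{\max}-\s_{\min}}{\pi}R\cos\dl$ --- is the correct classical strategy, and the bulk computation is right.

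The gap is in the transition arcs, exactly where you anticipated trouble, and the fix you propose does not work. The lemma you invoke --- that all zeros of $F$ lie in a horizontal strip $|\Im z|\le C$ --- is false under the hypotheses of the theorem whenever the exponents $\nu_m$ are not all equal. For $F(z)=z+\rme^{\ii z}$ (which satisfies $A_1=A_2\equiv 1$) the zeros obey $\rme^{-\Im z}=|z|$, hence $\Im z\sim -\log|z|\arr-\infty$; the same logarithmic escape occurs for the resonance determinant $F_{\aa,X}$ of this paper, since the exponential-free term $(\aa-\ii\kp/4\pi)^N$ has degree $N$ while the coefficient of $\rme^{\ii\kp\s_{\max}}$ has degree $<N$. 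In general the zeros are confined only to a region $|\Im z|\le A\log(2+|\Re z|)+B$. Moreover, the choice $\dl\sim R^{-1}$ is incompatible with your own bulk estimate: at the endpoints of $\Gamma_{\pm}(\dl)$ one has $|\Im z|\approx R\sin\dl\approx 1$, so each neglected term in $E(z)$ is damped only by the fixed factor $\rme^{-(\s_2-\s_1)}$ while the ratio $z^{\nu_m-\nu_1}$ may grow polynomially in $R$; thus $E$ need not be small there and the factorization argument breaks down. To retain dominance on the bulk arcs you need $R\sin\dl$ to grow faster than $\log R$, but then each transition arc has length of order at least $\log R$, meets the logarithmic zero-region, and a Jensen-type count of nearby zeros only bounds its argument variation by $\cO(\log R)$ --- yielding a remainder $\cO(\log R)$ rather than the claimed $\cO(1)$. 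Obtaining the sharp $\cO(1)$ remainder in these corner regions (via the distribution diagram and the analysis of the logarithmic chains of zeros) is precisely the nontrivial content of the P\'olya--Schwengeler--Langer theorem, and your sketch does not supply it.
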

%

\section{Rigorous definition of $\Op$}
\label{sec:operator}
The Schr\"odinger operator $\Op$ associated with the formal differential expression~\eqref{eq:formal} can be rigorously defined as a self-adjoint extension in $L^2(\dR^3)$ of the closed, densely defined, symmetric operator
\begin{equation}\label{key}
	\sfS_X u   := -\Delta u\,,\qquad 
	\dom\sfS_X := \{u\in H^2(\dR^3)\colon u|_X = 0\}\,,
\end{equation}
where the vector $u|_X = (u(x_1),u(x_2),\dots, u(x_N))^{\top}\in\dC^N$
is well-defined by the Sobolev embedding theorem~\cite[Thm. 3.26]{McL}.
The self-adjoint extensions of $\sfS_X$ with $N = 1$
have been first analyzed in the seminal paper~\cite{BF61}.
For $N > 1$ the symmetric operator $\sfS_X$ possesses
a rich family of self-adjoint extensions, not all
of which correspond to point interactions.
The
self-adjoint extensions of $\sfS_X$
corresponding to point interactions are investigated in detail in the monographs~\cite{AGHH, AK}, see also the references therein. Several alternative ways for parameterizing of all the self-adjoint extensions of $\sfS_X$ can be found in a more recent literature; see \eg~\cite{GMZ12, P08, T90}. Below we  follow the strategy of~\cite{GMZ12} and use some of notations therein. According to~\cite[Prop. 4.1]{GMZ12}, the adjoint of $\sfS_X$
can be characterized as follows
\begin{equation*}\label{key}
\begin{split}
	\dom \sfS_X^* & = 
	\left \{
		u = u_0 + \sum_{n=1}^N
		\left (
			\xi_{0n}\frac{\rme^{-r_n}}{r_n} 
						+ \xi_{1n} \rme^{-r_n}
		\right )
		\colon u_0 \in\dom\sfS_X,
		\xi_0,\xi_1\in\dC^N 
	\right \}\,,\\
	\sfS_X^* u  & = -\Delta u_0 - \sum_{n=1}^N
	\left (
	\xi_{0n}
	\frac{\rme^{-r_n}}{r_n} + \xi_{1n}
	\left (\rme^{-r_n} - \frac{2\rme^{-r_n}}{r_n}\right )
	\right )\,,\\
\end{split}
\end{equation*}
where $r_n\colon\dR^3\arr\dR_+$, $r_n(x) := |x-x_n|$ for
all $n =1,2,\dots, N$ and $\xi_0 = \{\xi_{0n}\}_{n=1}^N$, $\xi_1 = \{\xi_{1n}\}_{n=1}^N$.
Next, we introduce the mappings 
$\G_0,\G_1\colon \dom\sfS_X^* \arr \dC^N$ by 
\begin{equation}\label{key}
	\G_0 u := 4\pi\xi_0 \and
	\G_1 u  := \left \{\lim_{x\arr x_n}\left (u(x) - 
	\frac{\xi_{0n}}{r_n}\right ) \right\}_{n=1}^N\,.
\end{equation} 
Eventually, the operator $\Op$ is defined as
the restriction of $\sfS_X^*$
\begin{equation}\label{key}
	\Op u    := \sfS_X^* u\,,\qquad
	\dom\Op  := \left \{u\in\dom\sfS_X^*\colon
	\G_1 u = \aa\G_0 u\right \}\,,
\end{equation}
\cf~\cite[Rem. 4.3]{GMZ12}. 
Finally, by~\cite[Prop. 4.2]{GMZ12}, 
the operator $\Op$ is self-adjoint in $L^2(\dR^3)$.
Note also that the operator $\Op$ is the same as
the one considered in~\cite[Chap. II.1]{AGHH}.
We remark that the usual self-adjoint
free Laplacian in $L^2(\dR^3)$ formally corresponds to the case $\aa = \infty$.

\section{Resonances of $\Op$}\label{sec:resonances}
The main aim of this section is to prove asymptotics of resonances given in~\eqref{eq:asymp}. Apart from that we provide
a condition on resonances through the pseudo-orbit
expansion, which is of independent interest and
which leads to an interpretation of the constant
$V_X$ in the graph theory.

\subsection{A condition on resonances for $\Op$}
\label{sec:condition}
%
First, we recall the definition of resonances for $\Op$ borrowed from~\cite[Sec. II.1.1]{AGHH}. This
definition provides at the same time a way to find them. To this aim we introduce the function
\begin{equation}\label{eq:F}
	F_{\aa,X}(\kp) := \det\left[
	\left \{
		\left(\aa- \frac{\ii \kp}{4\pi}\right)\dl_{nn'}
		-\wt G_\kp(x_n-x_{n'})
	\right\}_{n,n'=1}^{N,N}
	\right]\,, 
\end{equation}
where $\dl_{nn'}$ is the Kronecker symbol
and
 $\wt G_\kp(\cdot)$ is given by
\[
	\wt G_\kp(x) := 
	\begin{cases}
	0\,,&  x = 0\,,\\
	\frac{\rme^{\ii \kp |x|}}{4\pi |x|}\,,
	&  x \ne 0\,.
	\end{cases}
\]	
We say that $\kp_0\in\dC$ 
is a resonance of $\Op$ if
\begin{equation} \label{eq:rescon1}
	F_{\aa,X}(\kp_0) = 0\,,
\end{equation} 
holds. The multiplicity of the resonance $\kp_0$ equals the multiplicity of the zero of 
$F_{\aa,X}(\cdot)$ at $\kp = \kp_0$. 
In our convention
true resonances and negative eigenvalues of 
$\Op$ correspond to $\Im\kp_0 < 0$ and
$\Im\kp_0 > 0$, respectively. According
to~\cite[Thm. II.1.1.4]{AGHH} the number of negative eigenvalues of $\Op$ is finite and in the end it does not contribute to the asymptotics of the counting function for resonances of $\Op$. 
A connection between the above definition
of the resonances for $\Op$ and a more fundamental definition through the poles of the analytic continuation of the resolvent for $\Op$ can be justified
through the Krein formula in~\cite[\S II.1.1, Thm 1.1.1]{AGHH}.

It is not difficult to see using standard formula
for the determinant of a matrix
that $F_{\aa,X}$ is an exponential polynomial as in Definition~\ref{def:exppoly} with the coefficients dependent on $\aa$ and on the set $X$.

\subsection{Asymptotics of the number of resonances}
Recall the definition of the counting function
for resonances of $\Op$.
\begin{definition}
	We define the \emph{counting function} $\cN_{\aa,X}(R)$ as the number of resonances of $\Op$
	with multiplicities lying inside the disc $\cD_R$.
\end{definition}
%
Now, we have all the tools to provide a proof for the asymptotics
of resonances~\eqref{eq:asymp} stated in the introduction.

%
\begin{theorem}\label{thm-asym}
	The counting function for resonances of $\Op$
	asymptotically behaves as
	\begin{equation}\label{eq-asym}
		\cN_{\aa,X}(R) 
			= \frac{W_X}{\pi}R + \cO(1)\,,
		\qquad R\arr+\infty\,,
	\end{equation}
	with a constant $W_X \in [0,V_X]$, where $V_X$ is the size
	of $X$ defined in~\eqref{eq:V}. In addition,
	$W_X$ is independent of $\aa$.
\end{theorem}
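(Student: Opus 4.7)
The plan is to express $F_{\aa,X}$ explicitly as an exponential polynomial in the sense of Definition~\ref{def:exppoly} and then invoke Theorem~\ref{lem1}. Expanding the determinant in~\eqref{eq:F} via the Leibniz formula over permutations $\pi\in\Pi_N$, each diagonal entry contributes $(\aa-\ii\kp/4\pi)$ at a fixed point of $\pi$, while each off-diagonal entry contributes $-\rme^{\ii\kp|x_n-x_{\pi(n)}|}/(4\pi|x_n-x_{\pi(n)}|)$. With $y := \aa - \ii\kp/4\pi$, $f(\pi)$ the number of fixed points of $\pi$, and $\s_\pi := \sum_{n=1}^N |x_n - x_{\pi(n)}|$, this gives
\[
	F_{\aa,X}(\kp) = \sum_{\pi\in\Pi_N} \mathrm{sgn}(\pi)\, C_\pi\, y^{f(\pi)}\, \rme^{\ii\kp\s_\pi},
\]
where the coefficients $C_\pi := \prod_{n:\pi(n)\ne n}\bigl(-4\pi|x_n - x_{\pi(n)}|\bigr)^{-1}$ depend only on $X$.

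Next, I would group permutations sharing a common value of $\s_\pi$. Enumerating the distinct values in increasing order as $\s_1 < \s_2 < \dots < \s_M$ (so that $\s_1 = 0$ is attained by the identity, the unique permutation with $\s_\pi = 0$, and $\s_M = V_X$ by~\eqref{eq:V}), and setting $Q_m(\kp) := \sum_{\pi:\,\s_\pi = \s_m}\mathrm{sgn}(\pi)\,C_\pi\, y^{f(\pi)}$, one obtains
\[
	F_{\aa,X}(\kp) = \sum_{m=1}^M Q_m(\kp)\,\rme^{\ii\kp\s_m}.
\]
Retaining only those $m$ for which $Q_m\not\equiv 0$, each surviving $Q_m$ is a nonzero polynomial in $\kp$ of some degree $d_m$, and the rational function $Q_m(\kp)/\kp^{d_m}$ tends to its (nonzero) leading coefficient at infinity. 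Theorem~\ref{lem1} then yields $\cN_{\aa,X}(R) = (\s^\star_{\max} - \s^\star_{\min})R/\pi + \cO(1)$, where $\s^\star_{\min}$ and $\s^\star_{\max}$ are the smallest and largest surviving frequencies, and I set $W_X := \s^\star_{\max} - \s^\star_{\min}$.

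The main obstacle is showing that $W_X$ depends only on $X$, not on $\aa$. The key idea is to exploit the auxiliary variable $y$: each $Q_m$ equals $\widetilde Q_m(y)$ for the polynomial $\widetilde Q_m(y) := \sum_{\pi:\,\s_\pi = \s_m}\mathrm{sgn}(\pi)\,C_\pi\, y^{f(\pi)}$ whose coefficients are purely combinatorial-geometric data of $X$. Since $\kp\mapsto y$ is a non-constant affine map, $Q_m\equiv 0$ in $\kp$ if and only if $\widetilde Q_m\equiv 0$ in $y$; moreover $\deg Q_m = \deg\widetilde Q_m$ and the leading $\kp$-coefficient equals $(-\ii/4\pi)^{\deg\widetilde Q_m}$ times the leading $y$-coefficient. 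Consequently, the set of surviving indices, and thus $\s^\star_{\min}$ and $\s^\star_{\max}$, depend only on $X$. Since the identity alone attains $\s_\pi = 0$ and produces $\widetilde Q_1 = y^N \not\equiv 0$, the frequency $\s_1 = 0$ always survives, so $\s^\star_{\min} = 0$ and $W_X = \s^\star_{\max} \in [0, V_X]$, which completes the plan.
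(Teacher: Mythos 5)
Your proposal is correct and follows essentially the same route as the paper: expand the determinant via the Leibniz formula, recognize $F_{\aa,X}$ as an exponential polynomial with $\s_{\min}=0$ coming from the identity permutation and $\s_{\max}\le V_X$, apply Theorem~\ref{lem1}, and derive the $\aa$-independence from the fact that each frequency's coefficient is a polynomial in $y=\aa-\ii\kp/4\pi$ with $X$-dependent coefficients, so its identical vanishing does not depend on $\aa$. Your treatment of the $\aa$-independence is in fact slightly more systematic than the paper's (which only discusses the top surviving frequency), but it is a refinement of the same idea rather than a different argument.
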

\begin{proof}
	The argument relies on the resonance condition~\eqref{eq:rescon1}. 
	Note that the element of the matrix under the determinant in~\eqref{eq:F} located in the $n$-th row and the $n'$-th column is a product
	of a polynomial in $\kp$ and the exponential
	$\exp(\ii \kp \ell_{n n'})$ with $\ell_{nn'} = |x_n-x_{n'}|$. 
	Hence, expanding $F_{\aa,X}$ by means of a standard formula for the determinant,
	we get that each single term in $F_{\aa,X}$ is a product
	of a polynomial in $\kp$ and the exponential
	$\exp(\ii \kp\sum_{n=1}^N \ell_{n\pi(n)})$, where $\pi\in\Pi_N$ is a permutation of the set $\{1,2,\dots, N\}$.

	The term with the lowest multiple of $\ii \kp$ in the exponential is $\left(\aa - \frac{\ii \kp}{4\pi}\right)^N$, \ie~ there is no exponential
	at all and hence $\s_{\min} = 0$. 
	The largest possible multiple of $\ii \kp$ in the exponentials of $F_{\aa,X}$ is $V_X$. Hence, we get $\s_{\max} \le V_X$. The equality
	$\s_{\max} = V_X$ is not always satisfied.
	If the polynomial coefficient by $\exp(\ii\kp V_X)$ vanishes, 	we have strict inequality $\s_{\max} < V_X$. 
	Finally, Theorem~\ref{lem1} yields
	\[
		\cN_{\aa,X}(R) 
		= \cN_{F_{\aa,X}}(R)
		= \frac{W_X}{\pi} R + \cO(1)\,,
		\qquad R\arr \infty\,,
	\]
	with some $W_X \in [0,V_X]$.
	 		
	The term with the	largest multiple of $\ii \kp$ in the exponent can 
	be represented as a product
	$P\left(
		\aa - \frac{\ii \kp}{4\pi}
		\right )
		\exp(\ii\kp\s_{\rm max})$,
	where $P$ is a polynomial with 
	real coefficients of degree $< N$.
	For simple algebraic reasons, 
	if this term does not identically vanish
	as a function of $\kp$
	for some $\aa = \aa_0\in\dR$, then it does not
	identically	vanish in the same sense  for all $\aa\in\dR$. Hence, we obtain by Theorem~\ref{lem1} that $W_X$ is independent of $\aa$.
\end{proof}
The argument in Theorem~\ref{thm-asym} suggests  
the following implicit formula for the constant
$W_X$
\begin{equation}\label{eq:formula_WX}
	W_X = \inf\Big\{w\in [0,\infty) \colon \lim_{t\arr\infty}
	e^{-wt}|F_{\aa,X}(-\ii t)| = 0\Big\}\,,
\end{equation}
where $F_{\aa,X}(\cdot)$ is as in~\eqref{eq:F}.

\begin{remark}\label{rem:nonweyl}
	The proof of Theorem~\ref{thm-asym} gives slightly more, namely
	the case $W_X < V_X$ can occur only if 
	the maximum in the definition~\eqref{eq:V}
	of the size $V_X$ of $X$ is attained
	at more than one permutation, as otherwise
	cancellation of the principal term
	in the exponential polynomial $F_{\aa,X}$
	can not occur.
\end{remark}

\subsection{Pseudo-orbit expansion for the resonance condition}\label{sec:pseudo}

The resonance condition~\eqref{eq:rescon1} can be
alternatively expressed by contributions of the irreducible pseudo-orbits similarly as for quantum graphs~\cite{BHJ,Li4,Li3}. This expression is just  yet
another way how to write the determinant. 
However, in some cases one can easier find the terms of the determinant by studying pseudo-orbits on the corresponding directed graph and, eventually, verify their cancellations.

Consider a complete metric graph $G$ 
having $N$ vertices identified with the respective
points in the set $X$ and connected by $\frac{N(N-1)}{2}$ edges of lengths $\ell_{nn'} = |x_n - x_{n'}|$. To this graph 
we associate its oriented $G'$ counterpart, which is obtained from $G$ by replacing each edge $e$ of $G$ 
($e$ is the edge between the points with indices $n$ and $n'$) by two oriented bonds $b$, $\wh b$ of lengths $|b| = |\wh b| =  \ell_{nn'}$. The orientation of the bonds is opposite; $b$ goes from $x_n$ to $x_{n'}$, whereas $\wh b$ goes from $x_{n'}$ to $x_n$. 
\begin{definition}\label{def:orbits}
	With the graph $G'$ we associate the following concepts.
	\begin{myenum}
		\item [{\rm (a)}] A \emph{periodic orbit} $\g$ in the graph $G'$ is a closed path, which begins and ends at the same vertex, we label it by the oriented bonds, which it subsequently visits $\g = (b_1, b_2, \dots, b_n)$. 
		\item [{\rm (b)}] 
		A \emph{pseudo-orbit} $\wt \g$ is a collection of periodic orbits $\wt \gamma = \left\{\g_1,\g_2,\dots, \g_{n}\right\}$. 
		The number of periodic orbits
		contained in the pseudo-orbit $\wt \gamma$ will be denoted by  
		$|\wt\g|_{\rm o}\in\dN_0$.
		\item [{\rm (c)}] An \emph{irreducible pseudo-orbit} $\bar \g$ is a pseudo-orbit which does not contain any bond more than once. Furthermore, we define
		\[
			B_{\bar \gamma}(\kp)
			= \prod_{b_j \in \bar\gamma} \left(-\frac{\rme^{\ii \kp |b_j|}}{4\pi |b_j|}\right)\,.
		\]
		For $|\bar\gamma|_{\rm o}=0$  we set $B_{\bar\gamma} := 1$.
		We denote by $\bar\cO_{m}$
		the set of all irreducible pseudo-orbits in $G'$
		containing exactly $m\in\dN_0$ bonds.
		Note that the total length of $\ov{\gamma}$ is given 	by $\sum_{b_j\in\ov\gamma}|b_j|$.
	\end{myenum}
\end{definition} 
Note that any permutation $\pi\in\Pi_N$ can be represented as a product of disjoint cycles~\cite[Sec. 3.1]{Bon04}
\[
	\pi = 
	(v_1,v_2,\dots,v_{n_1})\,(v_{n_1+1}, \dots, v_{n_1+n_2})\cdots (v_{n_1 +\dots +n_{m(\pi)-1}+1},\dots,
	v_{n_1+\dots+n_{m(\pi)}})\,,
\]        
where $m(\pi)$ is the number of them, $n_j = n_j(\pi)$ is the length of the $j^{\rm th}$-cycle, and  $n(\pi)$
is the number of cycles in $\pi$ of length one.  In this notation, each parenthesis denotes one cycle and \eg\,for a cycle 	
$(v_1,v_2,\dots,v_{n_1})$ it holds that $\pi(v_1) = v_2$, $\pi(v_2) = v_3$, \dots , $\pi(v_{n_1}) = v_1$.
The permutations $\Pi_N$ are in one-to-one correspondence with irreducible pseudo-orbits
in Definition~\ref{def:orbits} through the decomposition into cycles; 
\cf~\cite[Sec. 3]{BHJ}. Namely, an irreducible pseudo-orbit $\ov\g = \ov\g(\pi)$ consists of 
periodic orbits, each of which is a cycle of $\pi$
in its decomposition, satisfying $n_j(\pi) > 1$.

With these definitions in hands, we can state the following proposition, whose proof is inspired by the proof of~\cite[Thm. 1]{BHJ}.
\begin{prop}\label{prop:pseudo_res}
	The resonance condition $F_{\aa,X}(\kp) = 0$ in~\eqref{eq:rescon1} can be alternatively written as
	\begin{equation}
	\begin{split}
				& \sum_{\pi \in \Pi_N} \sign\pi	\prod_{n=1}^N 
				\left(
					\left(\aa- \frac{\ii \kp}{4\pi}\right)\dl_{n\pi(n)}
					-\wt G_\kp(x_n-x_{\pi(n)}) 
				\right ) \\
		&\qquad\qquad = (-1)^N \sum_{n=0}^N\sum_{\bar\g\in\bar\cO_n} (-1)^{|\bar\g|_{\rm o}} 
			B_{\bar\g}(\kp) \left(\frac{\ii \kp}{4\pi}-\aa\right)^{N-n} = 0\,.				
	\end{split}
	\label{eq:rescon2}
	\end{equation}

\end{prop}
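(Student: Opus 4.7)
The plan is to start from the Leibniz expansion of the determinant in~\eqref{eq:F} and reorganise the resulting sum over $\Pi_N$ according to the cycle structure of the permutations, using the bijection alluded to in the paragraph preceding the proposition. The first equality in~\eqref{eq:rescon2} is immediate from the Leibniz formula once one notes that $\wt G_\kp(0)=0$, which collapses every diagonal entry of the matrix in~\eqref{eq:F} to $\aa - \tfrac{\ii\kp}{4\pi}$, while the off-diagonal entries are simply $-\wt G_\kp(x_n - x_{n'})$.

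Fixing $\pi\in\Pi_N$ and splitting $\{1,\dots,N\}$ into the $N-k$ fixed points of $\pi$ and the $k$ non-fixed ones, the corresponding term in the Leibniz expansion factorises into a ``diagonal'' contribution $\left(\aa - \tfrac{\ii\kp}{4\pi}\right)^{N-k}$ and an ``off-diagonal'' contribution which, by Definition~\ref{def:orbits}(c), is precisely $B_{\bar\g(\pi)}(\kp)$, where $\bar\g(\pi)\in\bar\cO_k$ is the irreducible pseudo-orbit formed by the cycles of $\pi$ of length at least two. The bijection $\pi\leftrightarrow\bar\g(\pi)$ between $\Pi_N$ and $\bigcup_{k=0}^{N}\bar\cO_k$, whose inverse reads off the cycles of the pseudo-orbit and declares all remaining indices fixed, then lets me reindex the outer sum as $\sum_{k=0}^N\sum_{\bar\g\in\bar\cO_k}$. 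Irreducibility is automatic because cycles of a permutation are vertex-disjoint, so no bond can occur twice.

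The main point that requires care is the bookkeeping of signs. Using the classical identity $\sign\pi = (-1)^{N-m(\pi)}$, where $m(\pi)$ is the total number of cycles of $\pi$, together with $m(\pi) = (N-k)+|\bar\g(\pi)|_{\rm o}$, one obtains $\sign\pi=(-1)^{k-|\bar\g(\pi)|_{\rm o}}$; subsequently pulling out $(-1)^{N-k}$ to convert $\aa - \tfrac{\ii\kp}{4\pi}$ into $\tfrac{\ii\kp}{4\pi}-\aa$ collapses the combined sign to the desired $(-1)^{N}(-1)^{|\bar\g(\pi)|_{\rm o}}$. Once this sign calculation is verified, the second equality in~\eqref{eq:rescon2} follows by direct substitution. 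No analytic difficulty arises: the argument is purely combinatorial, closely paralleling the derivation of~\cite[Thm.~1]{BHJ} in the quantum-graph setting.
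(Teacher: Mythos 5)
Your proposal is correct and follows essentially the same route as the paper: Leibniz expansion of the determinant, the bijection between permutations and irreducible pseudo-orbits via the cycle decomposition, and the sign bookkeeping through $\sign\pi=(-1)^{N\pm m(\pi)}$ combined with $m(\pi)=n(\pi)+|\bar\g(\pi)|_{\rm o}$ (your $(-1)^{N-m(\pi)}$ and the paper's $(-1)^{N+m(\pi)}$ are of course identical). The sign verification you carried out checks out, so there is nothing to add.
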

\begin{proof}
	Expanding the determinant in the definition of $F_{\aa,X}$
	we get
	\begin{equation}\label{eq:Fdet}
		F_{\aa,X}(\kp) 
		= \sum_{\pi \in \Pi_N} \sign\pi	\prod_{n=1}^N 
		\left(
			\left(\aa- \frac{\ii \kp}{4\pi}\right)\dl_{n\pi(n)}
			-\wt G_\kp(x_n-x_{\pi(n)}) 
		\right )
		\,.
	\end{equation}
	According to~\cite[Sec 4.1]{BC09}, we have
	$\sign\pi = (-1)^{N+m(\pi)}$.
	Substituting this formula for $\sign\pi$
	into~\eqref{eq:Fdet}, making use of the correspondence
	between irreducible periodic orbits and permutations, the formula $m(\pi) = n(\pi)+ |\bar\g(\pi)|_{\rm o}$, and performing some simple rearrangements, we find
	\[
	\begin{split}
		F_{\aa,X}(\kp) 
		& =
		\sum_{n=0}^N
		\sum_{\stackrel[n(\pi) = N-n]{}{\pi \in \Pi_N}}
		\sign\pi
		\prod_{s=1}^N 
		\left(
		\left(\aa- \frac{\ii\kp}{4\pi}\right)
		\dl_{s\pi(s)}
			-\wt G_\kp(x_s-x_{\pi(s)}) 
		\right ) \\
		& =
		\sum_{n=0}^N
		\sum_{\stackrel[n(\pi) = N-n ]{}{\pi \in \Pi_N}}
		(-1)^{N+n(\pi)}(-1)^{|\bar\g(\pi)|_{\rm o}}
		B_{\ov\g(\pi)}(\kp) 
		\left(\aa - \frac{\ii\kp}{4\pi}\right)^{N-n} \\
		& =
		(-1)^N\sum_{n=0}^N
		\sum_{\ov\g\in\cO_n}
		(-1)^{|\bar\g|_{\rm o}}
		B_{\ov\g}(\kp) 
		\left(\frac{\ii\kp}{4\pi}-\aa\right)^{N-n}\,.
		\qedhere
	\end{split}
	\]
\end{proof}
\begin{remark}\label{rem:pseudo_V}
	In view of Proposition~\ref{prop:pseudo_res},
	the value $V_X$ in~\eqref{eq:V} can be interpreted
	as the maximal possible total length of an irreducible
	pseudo-orbit in the graph $G'$.
\end{remark}

\section{Point configurations of Weyl- and non-Weyl-types}

Recall that a configuration of points 
is said to be
of Weyl-type if $W_X = V_X$ and of non-Weyl-type
if $W_X < V_X$. 
In this section we provide examples for
both types of point configurations and discuss related questions.
For the sake of convenience,
for a configuration of points $X = \{x_n\}_{n=1}^N$
and a permutation $\pi\in\Pi_N$ we define
\[
	\vX(\pi) := \sum_{n=1}^N |x_n - x_{\pi(n)}|\,.
\]
%

\subsection{Weyl-type configurations}\label{sec:Weyl}
First, we show that for low number of points
non-Weyl configurations do not exist.
\begin{prop}
	For $N=2, 3$, $W_X = V_X$ holds for any
	$X = \{x_n\}_{n=1}^N$.
\end{prop}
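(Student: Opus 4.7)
The plan is to combine Remark~\ref{rem:nonweyl}---which states that a non-Weyl configuration can only arise when the maximum in~\eqref{eq:V} is attained by more than one permutation---with an explicit inspection of the principal coefficient of $F_{\aa,X}$ in the remaining ambiguous cases. A recurring ingredient is the symmetry $\wt G_\kp(x) = \wt G_\kp(-x)$: reversing the orientation of a cycle of $\pi$ yields the same product of off-diagonal Green's function factors, so a cycle and its inverse always contribute the same exponential-polynomial term and cannot cancel each other.

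For $N = 2$, the group $\Pi_2$ contains only the identity, with $\vX(\mathrm{id}) = 0$, and the transposition $(12)$, with $\vX((12)) = 2|x_1 - x_2| > 0$. The maximum is attained at a unique permutation and Remark~\ref{rem:nonweyl} immediately gives $W_X = V_X$.

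For $N = 3$, I enumerate $\Pi_3$. Writing $a := |x_1 - x_2|$, $b := |x_2 - x_3|$, $c := |x_3 - x_1|$, the identity contributes $0$; the transpositions $(12),(23),(13)$ contribute $2a$, $2b$, $2c$; and the two $3$-cycles $(123),(132)$ each contribute $a + b + c$. By the triangle inequality $a + b + c \ge 2\max(a, b, c)$, so $V_X = a + b + c$ and the maximum is always realized by the two $3$-cycles. When the points are not colinear, this inequality is strict over every transposition, so only the $3$-cycles attain $V_X$; both are even permutations and, by the symmetry of $\wt G_\kp$, contribute identically to the coefficient of $\exp(\ii \kp V_X)$ in $F_{\aa,X}$. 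Their sum is a nonzero constant multiple of $\exp(\ii \kp V_X)$, and $W_X = V_X$ follows.

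The remaining subcase is the colinear one, in which some transposition also attains $V_X$; by symmetry we may assume $x_3$ lies between $x_1$ and $x_2$, so $a = b + c$ and $\vX((12)) = 2a = V_X$. Here I would compute the coefficient of $\exp(\ii \kp V_X)$ in $F_{\aa,X}$ explicitly: the transposition $(12)$ contributes a term carrying the factor $\aa - \frac{\ii\kp}{4\pi}$ coming from the fixed diagonal entry, while each of the two $3$-cycles contributes a $\kp$-independent constant. The resulting polynomial coefficient of $\exp(\ii \kp V_X)$ is of degree one in $\kp$, and hence does not vanish identically, yielding $W_X = V_X$ once more. This is the step I expect to be the main obstacle: Remark~\ref{rem:nonweyl} on its own is insufficient in this borderline configuration, and one must combine the symmetry of $\wt G_\kp$ with a careful comparison of the polynomial degrees in $\kp$ of the various permutation contributions in order to rule out any cancellation of the principal coefficient.
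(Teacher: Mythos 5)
Your proof is correct and follows essentially the same route as the paper's: identify $V_X$ and the permutations attaining it, then verify that the coefficient of $\rme^{\ii\kp V_X}$ in $F_{\aa,X}$ cannot vanish identically and invoke Theorem~\ref{lem1} (equivalently Remark~\ref{rem:nonweyl}). Your degree-in-$\kp$ argument for the colinear three-point case merely makes explicit what the paper dismisses as ``simple algebraic reasons.''
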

\begin{proof}
	For $N = 2$, we have $V_X = 2\ell_{12}$.
	From~\eqref{eq:F} and~\eqref{eq:rescon1} we obtain the resonance condition  
	\[
		\left(\frac{\ii \kp}{4\pi}-\aa\right)^2 - \frac{\rme^{2\ii\kp\ell_{12}}}{(4\pi\ell_{12})^2} =  0\,.
	\]
	Obviously, the coefficient at $\rme^{\ii\kp V_X}$   does not identically vanish  and
	the claim follows from Theorem~\ref{lem1}.

	Let $N=3$.
	Without loss of generality we assume that  $\ell_{12}\geq \ell_{23}\geq \ell_{13}$. 
	By triangle inequality we have $\ell_{12} + \ell_{23}+ \ell_{13}\geq 2 \ell_{12}$. 
	The equality is attained only if all three points  belong to a straight line. Hence, we have
	$V_X = \ell_{12} + \ell_{23} + \ell_{13}$,
	which is attained at the cyclic shift,
	having the decomposition
	$\pi = (1,2,3)$.	
	From~\eqref{eq:rescon1} we obtain the resonance condition  
	\[
	\begin{split}
	  &\qquad\qquad \left(\frac{\ii\kp}{4\pi}-\aa\right)^3-
	  \left(\frac{\ii\kp}{4\pi}-\aa\right)
	  f(\kp) +
	 g(\kp)=0\,,\quad \text{where}\\[0.4ex]
		&f(\kp)  :=
		\frac{1}{(4\pi)^2}
		\left( 
		\frac{\rme^{2\ii\kp\ell_{12}}}{(\ell_{12})^2}
	 		  +
	 		  \frac{\rme^{2\ii\kp\ell_{23}}}{(\ell_{23})^2}
	 		  +
	 		  \frac{\rme^{2\ii\kp\ell_{13}}}{(\ell_{13})^2}\right)\,,  \quad
		g(\kp) 
		:= 
	 	\frac{2\rme^{\ii\kp(\ell_{12}+\ell_{23}+\ell_{13})}}{(4\pi)^3\ell_{12}\ell_{23}\ell_{13}}\,.
	\end{split}
	\]
	For simple algebraic reasons, 
	in both cases
	$\ell_{12} + \ell_{23}+ \ell_{13} > 2 \ell_{12}$
	and
	$\ell_{12} + \ell_{23}+ \ell_{13} = 2 \ell_{12}$
	the coefficient at
	$\rme^{\ii\kp V_X}$ does not vanish identically and the claim also follows from Theorem~\ref{lem1}.
\end{proof}

Next, we show that Weyl-type configurations
are not something specific for low number of points
and they can be constructed for any number of them.
\begin{theorem}
	For any $N\ge 2$ there exist
	a configuration of points $X = \{x_n\}_{n=1}^N$
	such that $W_X = V_X$.
\end{theorem}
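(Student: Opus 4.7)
The strategy relies on the analysis in the proof of Theorem~\ref{thm-asym}: one has $W_X=V_X$ precisely when the coefficient of $\rme^{\ii\kp V_X}$ in the expansion~\eqref{eq:Fdet} of $F_{\aa,X}$ does not vanish identically in $\kp$. Setting $\Pi_N^\star:=\{\pi\in\Pi_N:\vX(\pi)=V_X\}$ and using the identity $\sign\pi=(-1)^{N+m(\pi)}$, a direct computation shows that this coefficient equals
\[
\sum_{\pi\in\Pi_N^\star}(-1)^{m(\pi)-n(\pi)}\left(\aa-\frac{\ii\kp}{4\pi}\right)^{n(\pi)}\frac{1}{(4\pi)^{N-n(\pi)}\prod_{n:\pi(n)\neq n}\ell_{n\pi(n)}},
\]
with $\ell_{nm}:=|x_n-x_m|$. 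The plan is to construct $X$ for which this sum is non-zero.

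To each $\pi\in\Pi_N$ I associate the multiset $\mathcal{E}(\pi)$ of unordered pairs $\{n,\pi(n)\}$ taken over the indices $n$ with $\pi(n)\neq n$. I then choose $X\in(\dR^3)^N$ in sufficiently general position, namely so that $\sum_{e\in\mathcal{E}_1}|e|\neq\sum_{e\in\mathcal{E}_2}|e|$ whenever $\mathcal{E}_1\neq\mathcal{E}_2$ are two such multisets arising from elements of $\Pi_N$. Each of the finitely many forbidden equalities cuts out a proper real-analytic subset of $(\dR^3)^N$, so the complement is open and dense and an admissible $X$ exists. Since $\vX(\pi)=\sum_{e\in\mathcal{E}(\pi)}|e|$, this condition forces every $\pi\in\Pi_N^\star$ to share a single common multiset $\mathcal{E}^\star$.

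The central combinatorial observation is a rigidity statement: the multigraph underlying $\mathcal{E}(\pi)$ decomposes uniquely into a disjoint union of double edges (arising from the $2$-cycles of $\pi$) and simple cycles of length $\ge 3$ (arising from its longer cycles), so its connected components recover the cycle lengths of $\pi$. In particular $m(\pi)$, $n(\pi)$, and $\prod_{n:\pi(n)\neq n}\ell_{n\pi(n)}$ depend only on $\mathcal{E}(\pi)$, so every summand in the displayed formula coincides for $\pi\in\Pi_N^\star$; the total coefficient of $\rme^{\ii\kp V_X}$ equals $|\Pi_N^\star|$ times this common non-zero rational function of $\kp$. Theorem~\ref{lem1} then yields $W_X=V_X$. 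The only non-routine step is the rigidity just invoked, namely the bijection between the non-trivial cycles of $\pi$ and the connected components of the multigraph of $\mathcal{E}(\pi)$; this is an elementary combinatorial fact, since each $k$-cycle contributes a disjoint connected subgraph with $k$ edges (a double edge for $k=2$, a simple $k$-gon for $k\ge 3$).
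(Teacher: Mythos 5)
Your proof is correct, but it takes a genuinely different route from the paper. The paper proceeds by explicit construction: for even $N=2m$ it places $m$ antipodal pairs on the unit sphere, so that the maximum in~\eqref{eq:V} is attained at a \emph{unique} permutation and Remark~\ref{rem:nonweyl} immediately rules out cancellation; for odd $N$ it adds the centre of the sphere, lists the $2m+1$ maximizing permutations explicitly, and computes the leading coefficient by hand. You instead argue by genericity: after verifying (correctly) that the coefficient of $\rme^{\ii\kp V_X}$ is $\sum_{\pi\in\Pi_N^\star}(-1)^{m(\pi)-n(\pi)}\bigl(\aa-\tfrac{\ii\kp}{4\pi}\bigr)^{n(\pi)}(4\pi)^{n(\pi)-N}\prod_{n:\pi(n)\neq n}\ell_{n\pi(n)}^{-1}$, you choose $X$ so that distinct edge multisets have distinct length sums, forcing all maximizers to share one multiset $\mathcal{E}^\star$; your rigidity observation (the multigraph of $\mathcal{E}(\pi)$ splits into double edges and simple cycles matching the non-trivial cycles of $\pi$, so $m(\pi)$, $n(\pi)$ and the length product are functions of $\mathcal{E}(\pi)$ alone) then makes all summands identical, so they reinforce rather than cancel. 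This buys you strictly more than the statement asks: Weyl-type configurations form a dense open subset of all configurations of $N$ distinct points, which nicely complements the paper's Remark~\ref{rem:nonweyl} and its non-Weyl example. What you lose is concreteness — no explicit configuration is exhibited — and the paper's even-$N$ case is shorter because uniqueness of the maximizer makes the cancellation question trivial. The one step you assert without proof is that each forbidden equality $\sum_e c_e\ell_e=0$ (with integer $c_e$ not all zero) cuts out a \emph{proper} analytic subset; this is true but deserves a sentence, e.g.\ sending one point $x_{n_0}$ with $c_{\{n_0,m\}}\neq0$ to infinity in a direction $\omega$ shows the left-hand side grows like $\bigl(\sum_m c_{\{n_0,m\}}\bigr)|x_{n_0}|-\langle\omega,\sum_m c_{\{n_0,m\}}x_m\rangle$, which cannot vanish for all configurations unless all these coefficients are zero.
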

\begin{proof}
	We provide two different constructions for the cases of even and odd number of points in the set $X$. 
			
	For $N = 2 m$, $m\in \dN$, 
	we choose the configuration $X = \{x_n\}_{n=1}^{2m}$
	as follows. First, we fix
	arbitrary distinct point
	$x_1,x_2,\dots,x_m$ on 
	the unit sphere $\dS^2\subset\dR^3$,
	so that none of them is diametrically opposite
	to the other. Second,
	we select the point $x_{m+k}\in\dS^2$, $k=1,\dots,m$ to be diametrically opposite
	to $x_k$. For simple geometric reasons, we have
	$V_X = 4m$
	and this maximum is attained at the unique permutation $\pi$ having the following decomposition into cycles
	$\pi = (1,m+1)(2,m+2)\dots(m,2m)$.
	In view of Remark~\ref{rem:nonweyl},
	we conclude that $W_X = V_X$.

	For $N = 2 m+1$, $m\in \dN$, we choose the configuration $X = \{x_n\}_{n=1}^{2m+1}$,
	as follows. 
	First,
	we distribute the points $\{x_n\}_{n=1}^{2m}$
	on $\dS^2$	as in the case of even $N$.
	Second, 
	we put the point $x_{2m+1}$  into the center of $\dS^2$. If a permutation
	$\pi\in\Pi_{2m+1}$ does not contain the cycle $(2m+1)$,
	then we have $\vX(\pi) \le 4m$ 
	and the case of equality occurs only for the permutations
	\begin{equation*}\label{key}
	\begin{split}
		&\pi_1  = 
		(1,m+1)(2,m+2)\dots(m-1,2m-1)(m,2m,2m+1),\\
		&\pi_2  = 
		(1,m+1)(2,m+2)\dots(m-1,2m-1)(m,2m+1,2m),\\
		&\pi_3  = (1,m+1)(2,m+2)\dots(m-2,2m-2)(m-1,2m-1,2m+1)
		(m,2m),\\	
		&\pi_4  = (1,m+1)(2,m+2)\dots(m-2,2m-2)(m-1,2m+1,2m-1)
		(m,2m),\\
		&\dots\dots\\
		&\pi_{2m-1}  =
		(2,m+2)\dots(m-1,2m-1)(m,2m)(1,m+1,2m+1), \\
		&\pi_{2m}  =
		(2,m+2)...(m-1,2m-1)(m,2m)(1,2m+1,m+1).
	\end{split}
	\end{equation*}
	If a permutation
	$\pi\in\Pi_{2m+1}$ contains the cycle $(2m+1)$,
	then we again have $\vX(\pi) \le 4m$ and the case of equality
	happens for the unique permutation
	\[
		\pi_{2m+1} = (1,m+1)(2,m+2)...(m,2m)(2m+1)\,.
	\] 
	Hence, we obtain that $V_X = 4m$.
	Moreover, the exponential polynomial $F_{\aa,X}$
	in~\eqref{eq:F} can be written as
	\[
		F_{\aa,X}(\kp) = 
		(-1)^m\frac{4m+4\pi \aa - \ii\kp}{2^{2m}(4\pi)^{2m+1}}
		\rme^{\ii(4m)\kp} + 
		g_0(\kp) + \sum_{l=1}^Lg_l(\kp)  \rme^{\ii\s_l\kp}\,, 
	\]
	where $\s_l \in (0,4m)$ and $g_0, g_l$ are polynomials, $l=1,2,\dots, L$.
	Finally, by Theorem~\ref{lem1} we get
	$W_X = V_X =  4m$.
\end{proof}

\subsection{An example of a non-Weyl-type configuration}\label{sec:nonWeyl}

Eventually, we provide an example of a configuration
of points $X = \{x_n\}_{n=1}^4$ for which $W_X < V_X$
in Theorem~\ref{thm-asym}, since there 
will be a significant cancellation of some terms.

For $a,b,c > 0$, we consider a configuration of points
$X = \{x_n\}_{n=1}^4$, where
\[
	x_1 = (0,0,0)^\top\,,
	\quad x_2 = (a,-b,0)^\top\,,
	\quad x_3 = (a,b,0)^\top\,, 
	\quad
	x_4 = (c,0,0)^\top\,;
\]
see Figure~\ref{fig1}. 
Notice that 
\begin{equation}\label{eq:lengths}
	\ell_{12} = \sqrt{a^2+b^2},\quad
	\ell_{23} = 2b,\quad
	\ell_{34} = \sqrt{(a-c)^2+b^2},\quad
	\ell_{14} = c.
\end{equation}
Let us assume that $b$ and $c$ are sufficiently small in comparison to $a$, being more precise $2b+c<\sqrt{a^2+b^2}+\sqrt{(a-c)^2+b^2}$.
\begin{figure}
\centering
\includegraphics[width=0.85\textwidth]{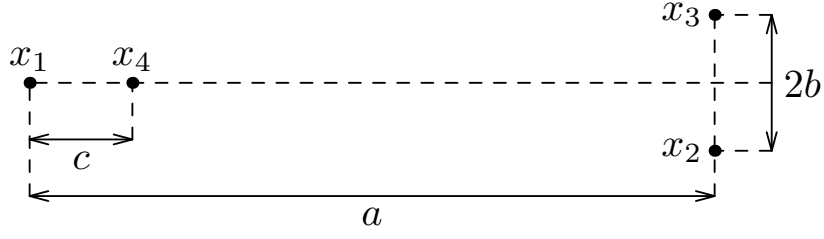}
\caption{Discrete set $X = \{x_n\}_{n=1}^4$ related to example in Section~\ref{sec:nonWeyl}.}
\label{fig1}
\end{figure} 
Let us first write down the general resonance condition~\eqref{eq:rescon1} for four points.
\[
\begin{split}
	& c_0^4 -c_0^2(c_1^2+c_2^2+c_3^2+c_4^2+c_5^2+c_6^2)+2c_0(c_1c_2c_4+c_1c_3c_5+c_2c_3c_6+c_4c_5c_6) \\
	& \qquad\qquad\qquad\quad
	+c_1^2c_6^2+c_2^2c_5^2+c_3^2c_4^2-2(c_1c_2c_5c_6+c_1c_3c_4c_6+c_2c_3c_4c_5) = 0\,,
\end{split} 
\]
where 
\begin{eqnarray*}
	c_0 = \aa-\frac{\ii\kp}{4\pi}\,,
	\quad 
	c_1 = -\frac{\rme^{\ii\kp\ell_{12}}}{4\pi\ell_{12}}\,,
	\quad 
	c_2 = -\frac{\rme^{\ii\kp\ell_{13}}}{4\pi\ell_{13}}\,,
	\quad 
	c_3 = -\frac{\rme^{\ii\kp\ell_{14}}}{4\pi\ell_{14}}\,,\\
	c_4 = -\frac{\rme^{\ii\kp\ell_{23}}}{4\pi\ell_{23}}\,,
	\quad 
	c_5 = -\frac{\rme^{\ii\kp\ell_{24}}}{4\pi\ell_{24}}\,,
	\quad  
	c_6 = -\frac{\rme^{\ii\kp\ell_{34}}}{4\pi\ell_{34}}\,. 
\end{eqnarray*}
In our special case we have 
\begin{equation}\label{eq:lengths2}
	\ell_{12} = \ell_{13},\quad 
	\ell_{34} = \ell_{24} \and 
	\ell_{12}+\ell_{34} > \ell_{14}+\ell_{23}\,.
\end{equation}
Moreover, using~\eqref{eq:lengths} we get
\begin{equation}\label{eq:lengths3}
\begin{split}
	\ell_{12}+\ell_{23}+\ell_{34}+\ell_{14} 
	 & = 2b+c+ \sqrt{a^2+b^2}+\sqrt{(a-c)^2+b^2} \\
	& < 2\sqrt{a^2+b^2}+2\sqrt{(a-c)^2+b^2}  \\
	& = \ell_{12}+\ell_{34}+\ell_{13}+\ell_{24}\,.
\end{split}
\end{equation}
The elements of the group $\Pi_4$ can be decomposed into disjoint cycles as
\begin{center}
\begin{tabular}{ l l l l l  }
$\pi_1 = (1)(2)(3)(4)$, 
&& 
$\pi_9 = (1,2,3)(4)$,
&& 
$\pi_{17} = (1,3)(2,4)$, \\
$\pi_2 = (3,4)(1)(2)$,
&& 
$\pi_{10} = (1,2,3,4)$,
&&  
$\pi_{18} = (1,3,2,4)$,\\
$\pi_3 = (2,3)(1)(4)$,
&&
$\pi_{11} = (1,2,4,3)$,
&&
$\pi_{19} = (1,4,3,2)$,
\\
$\pi_4 = (2,3,4)(1)$,
&&
$\pi_{12} = (1,2,4)(3)$,
&&
$\pi_{20} = (1,4,2)(3)$,
\\
$\pi_5 = (2,4,3)(1)$, 
&&
$\pi_{13} = (1,3,2)(4)$,
&&
$\pi_{21} = (1,4,3)(2)$,
\\
$\pi_6 = (2,4)(1)(3)$,
&&
$\pi_{14} = (1,3,4,2)$,
&&
$\pi_{22} = (1,4)(2)(3)$,\\
$\pi_7 = (1,2)(3)(4)$, 
&&
$\pi_{15} = (1,3)(2)(4)$,
&&
$\pi_{23} = (1,4,2,3)$,
\\
$\pi_8 = (1,2)(3,4)$,
&&
$\pi_{16} = (1,3,4)(2)$,
&&
$\pi_{24} = (1,4)(2,3)$.
\end{tabular}
\end{center}
Using the above decompositions of permutations and~\eqref{eq:lengths2},~\eqref{eq:lengths3} we find
\[
\begin{split}
	\vX(\pi_8) & = 	\vX(\pi_{11}) = \vX(\pi_{14}) = \vX(\pi_{17}) \\
	& > \vX(\pi_{10}) = \vX(\pi_{18}) = \vX(\pi_{19}) = 
	\vX(\pi_{23}) > \dots > \vX(\pi_1) = 0\,.
\end{split}
\]
Hence, $V_X = \vX(\pi_8) = \vX(\pi_{11}) = \vX(\pi_{14}) = \vX(\pi_{17})$ and in view of~\eqref{eq:lengths2}  the leading term corresponding to $\exp(\ii\kp V_X)$ in the resonance condition~\eqref{eq:rescon1} cancels
\[
  \frac{\rme^{2\ii\kp(\ell_{12}+\ell_{34})}}{(4\pi)^4\ell_{12}^2\ell_{34}^2} 
  + \frac{\rme^{2\ii\kp(\ell_{13}+\ell_{24})}}{(4\pi)^4\ell_{13}^2\ell_{24}^2} -\frac{2\rme^{\ii\kp(\ell_{12}+\ell_{34}+\ell_{13}+\ell_{24})}}{(4\pi)^4\ell_{12}\ell_{34}\ell_{13}\ell_{24}}  
  = 0\,.
\]
However, the succeeding term in the condition~\eqref{eq:rescon1}
corresponding to the exponent $\exp(\ii\kp \vX(\pi_{10}))$
does not cancel
\[
	-\frac{2}{(4\pi)^4}
	\left(
		\frac{\rme^{\ii\kp(\ell_{12}+\ell_{23}+\ell_{34}+\ell_{14})}}{\ell_{12}\ell_{23}\ell_{34}\ell_{14}}
		+
		\frac{\rme^{\ii\kp(\ell_{13}+\ell_{23}+\ell_{24}+\ell_{14})}}{\ell_{13}\ell_{23}\ell_{24}\ell_{14}}
	\right)\ne 0\,.
\]
Finally, we end up with
\[
	W_X = \vX(\pi_{10}) = \vX(\pi_{18}) = \vX(\pi_{19}) = 
	\vX(\pi_{23})  < V_X.
\]	

\section{Conclusions}\label{sec:discussion}
In this note, we considered the three-dimensional Schr\"odinger operator
$\Op$
with finitely many point interactions of equal strength 
$\aa\in\dR$ supported on the discrete set $X$.
As the main result, we obtained that the
resonance counting
function for $\Op$
behaves asymptotically linear.
The constant coefficient standing by the linear term in this asymptotics can be seen as the effective size of $X$.

The obtained law of distribution for resonances is very much different from the behaviour of the resonance counting function for the three-dimensional Schr\"odinger operator with a regular potential~\cite{CH08}. 
On the other hand, it resembles 
the corresponding law for one-dimensional
Schr\"odinger operators with regular potentials~\cite{Z87} and for quantum graphs~\cite{DP, DEL}.

We associated a complete directed weighted graph $G'$ with the configuration of points $X$ in a natural way. 
The effective size of $X$ can be estimated from above by
the actual size of $X$ defined as the maximal possible total length of an irreducible pseudo-orbit in the graph $G'$.
An implicit formula for finding the effective size of $X$
was given. We also provided examples showing
sharpness of  our upper bound on the effective size of $X$. Point configurations for which the
effective size of $X$ is strictly smaller than its actual size can be seen as analogues
of `non-Weyl' quantum graphs~\cite{DEL, DP}. The physical experiment which could illustrate mathematical results found in this paper
still awaits realization; one of the possibilities could be to use microwave cavities (see e.g. \cite{DKS}).

\subsection*{Acknowledgments}
J.\,L. acknowledges the support by the grant 15-14180Y of the Czech Science Foundation. V.\,L. acknowledges the support by the grant No. 17-01706S of the Czech Science Foundation (GA\v{C}R).
The authors are grateful to Prof. P. Exner for discussions. The authors thank the referee for useful remarks which helped to improve the paper.
\newcommand{\etalchar}[1]{$^{#1}$}

\end{document}